\pgfplotsset{compat=newest}
\pgfplotsset{compat=newest,
	/pgfplots/ybar legend/.style={
		/pgfplots/legend image code/.code={%
			\draw[##1,/tikz/.cd,bar width=3pt,yshift=-0.2em,bar shift=0pt]
			plot coordinates {(0cm,0.8em)};},
	},}
\newcommand\myeqa{\stackrel{\mathclap{\mbox{($a$)}}}{=}}
\newcommand\myeqb{\stackrel{\mathclap{\mbox{\scriptsize{($\eta=4$)}}}}{=}}
\theoremstyle{plain}
\newtheorem{theorem}{Theorem}
\def\therule{\makebox[\algorithmicindent][l]{\hspace*{.5em}\vrule height .75\baselineskip depth .25\baselineskip}}%
\newtoks\therules
\def\appendto#1#2{\expandafter#1\expandafter{\the#1#2}}
\def\gobblefirst#1{
	#1\expandafter\expandafter\expandafter{\expandafter\@gobble\the#1}}%
\def\LState{\State\unskip\the\therules}
\def\pushindent{\appendto\therules\therule}
\def\popindent{\gobblefirst\therules}
\def\printindent{\unskip\the\therules}
\def\printandpush{\printindent\pushindent}
\def\popandprint{\popindent\printindent}
\algrenewcommand\algorithmicprocedure{\textbf{Input}}
\algrenewcommand\algorithmicreturn{\textbf{Output:}}
\begin{document}
	\bstctlcite{IEEEexample:BSTcontrol}
	
	\title{Spatiotemporal Dependable Task Execution Services in MEC-enabled Wireless Systems}
	
	\author{Mustafa~Emara,~\IEEEmembership{Student Member,~IEEE,}
		Hesham~ElSawy,~\IEEEmembership{Senior Member,~IEEE,}
		Miltiades~C.~Filippou,~\IEEEmembership{Senior Member,~IEEE,}
		Gerhard~Bauch,~\IEEEmembership{Fellow,~IEEE}
		\thanks{M. Emara and Miltiades C. Filippou are with the Germany standards R\&D team, Next Generation and Standards, Intel Deutschland GmbH, 85579 Neubiberg, Germany (e-mail: mustafa.emara, miltiadis.filippou@intel.com)}
		\thanks{H. ElSawy is with the Electrical Engineering Department, King Fahd University of Petroleum and Minerals, 31261 Dhahran, Saudi Arabia (email: hesham.elsawy@kfupm.edu.sa).}
		\thanks{G. Bauch and M. Emara are with the Institute of Communications,
			Hamburg University of Technology, Hamburg, 21073 Germany (email: bauch@tuhh.de).}}
	\maketitle
	\thispagestyle{empty}
	\maketitle
	\thispagestyle{empty}
	
\newacronym{BS}{BS}{base station}
\newacronym{CCDF}{CCDF}{Complementary Cumulative Distribution Function}
\newacronym{C-RAN}{C-RAN}{Cloud Radio Access Network}
\newacronym{CTMC}{CTMC}{continuous time Markov chain}
\newacronym{CRA}{CRA}{computation resource availability}
\newacronym{DL}{DL}{Downlink}
\newacronym{DTMC}{DTMC}{discrete time Markov chain}
\newacronym{FDMA}{FDMA}{frequency division multiple access}
\newacronym{IoT}{IoT}{Internet of Things}
\newacronym{KPI}{KPI}{key performance indicator}
\newacronym{MEC}{MEC}{Multi-access Edge Computing}
\newacronym{M2M}{M2M}{machine-to-machine}
\newacronym{MAC}{MAC}{medium access control}
\newacronym{OSP}{OSP}{offloading success probability}
\newacronym{OoO}{OoO}{out of operation}
\newacronym{PDF}{PDF}{probability density function}
\newacronym{PPP}{PPP}{Poisson point process}
\newacronym{PM}{PM}{physical machine}
\newacronym{PMF}{PMF}{probability mass function}
\newacronym{QoS}{QoS}{quality of service}
\newacronym{TER}{TER}{task execution retainability}
\newacronym{TEC}{TEC}{task execution capacity}
\newacronym{SINR}{SINR}{signal to interference plus noise ratio}
\newacronym{TDMA}{TDMA}{time division multiple access}
\newacronym{TSP}{TSP}{transmission success probability}
\newacronym{UE}{UE}{User Equipment}
\newacronym{UL}{UL}{Uplink}
\newacronym{URLLC}{URLLC}{ultira reliable low latency communication}
\newacronym{VM}{VM}{virtual machine}
\begin{abstract}

Multi-access Edge Computing (MEC) enables computation and energy-constrained devices to offload and execute their tasks on powerful servers. Due to the scarce nature of the spectral and computation resources, it is important to jointly consider i)~contention-based communications for task offloading and ii)~parallel computing and occupation of failure-prone MEC processing resources (virtual machines). The feasibility of task offloading and successful task execution with virtually no failures during the operation time needs to be investigated collectively from a combined point of view. To this end, this letter proposes a novel spatiotemporal framework that utilizes stochastic geometry and continuous time Markov chains to jointly characterize the communication and computation performance of dependable MEC-enabled wireless systems. Based on the designed framework, we evaluate the influence of various system parameters on different dependability metrics such as (i) computation resources availability, (ii) task execution retainability, and (iii) task execution capacity. Our findings showcase that there exists an optimal number of virtual machines for parallel computing at the MEC server to maximize the task execution capacity.

\end{abstract}
\begin{IEEEkeywords}
Multi-access edge computing, virtual machines, queueing theory, stochastic geometry, dependability.
\end{IEEEkeywords}
%
%
%
%
%
\section{Introduction}

The deployment of \ac{MEC} in 5G and beyond systems allows applications to be instantiated at the edge of the network. As a direct benefit, efficient task execution is feasible due to the \ac{MEC} servers high computation power \cite{Mach2017}. A major challenge for network operators is to provide dependable and ubiquitous computing services that meet the computing demands of devices running various heterogeneous applications (e.g., artificial intelligence, Blockchain, automotive and E-health). Efficient spectrum access for task offloading along with parallel task computation at the \ac{MEC} server are required to jointly meet such heterogeneous application requirements \cite{Porambage2018}. To ensure efficient operation, the task offloading feasibility, computation resources availability, and task execution retainability ought to be jointly quantified and optimized \cite{Bennis2018}. 

In \ac{MEC}-enabled networks, task execution at the \ac{MEC} server is strongly tied to the resources availability and the resilience to failures \cite{Bagchi2020}. In this context, various cloud-based provisioning and resilience schemes are discussed in \cite{Colman-Meixner2016}. Causes of service disruption due to \acp{PM} and \acp{VM} failures along with their analysis are provided in \cite{Birke2014}. With regard to wireless-based task offloading, \cite{Ko2018} examines the network scalability and identifies communication and computation performance frontiers. Heterogeneous networks analysis is presented in \cite{Lee2018}, where the network-wide outage probability is derived for task offloading assuming different computation architectural variants. Authors in \cite{Ko2018_b} proposed a transmission and energy efficient offloading algorithm based on a Markov decision process that accounts for the spatial and temporal network parameters. 

However, the aforementioned works either exclusively consider a dependability view of the network\cite{Bagchi2020,Colman-Meixner2016, Birke2014}, or a spatiotemporal one \cite{Ko2018, Lee2018, Ko2018_b}. As a result, the problem of feasible and dependable task execution, accounting for the joint limitation of network-wide mutual interference and parallel task computing by failure-prone \acp{VM} is still not addressed. Motivated by the above, we propose a spatiotemporal feasibility-assessment framework that entails network-wide mutual interference and temporal-based task arrivals/ processing in uplink \ac{MEC}-enabled networks. Furthermore, we adopt an individual (i.e., per-task and per-device) task execution criterion that aims to exploit the computation resources at the \ac{MEC} server if the radio conditions permit. Our analysis is then followed by the assessment of new service dependability-relevant \acp{KPI} that shed light on the system availability and task execution capability. 
\vspace{-5pt}
\section{System Model}\label{sec:system_model}

\subsubsection{\textbf{Network model}} We consider a cellular uplink network, where the \acp{BS} and devices are spatially deployed in $\mathbb{R}^2$ according to two independent homogeneous Poisson point processes (PPPs), denoted by $\mathrm{\Psi}$ and $\mathrm{\Phi}$ with intensities $\lambda_b$ and $\lambda_d$, respectively. An unbounded path-loss propagation model is adopted such that the signal power attenuates at rate of $r^{-\eta}$, where $r$ is the distance and $\eta$ is the path-loss exponent. Wireless links are assumed to undergo Rayleigh fading, where the power gains of the signal of interest $h$ and the interference signal $g$,  are exponentially distributed with unit power gain. Full path-loss channel inversion power control is adopted, which implies that all devices adjust their transmit powers such that the received uplink power levels at the \ac{BS} are equal to a predetermined threshold $\rho$. 

\begin{figure}
	\begin{center}
		\includegraphics{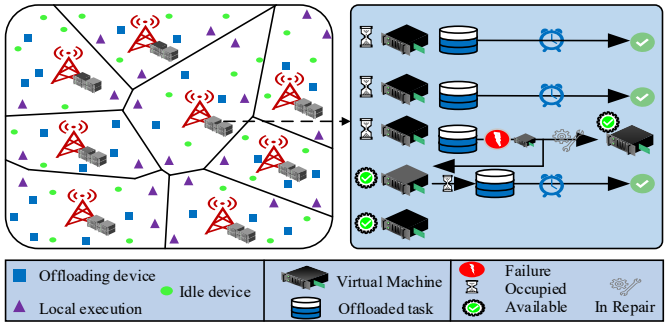}
		\caption{The considered system model with 5 VMs deployed.}
		\label{fig:system_model}
	\end{center}
\end{figure}

\subsubsection{\textbf{Offloading model}} We consider a continuous time system where task arrivals at each device are modeled via an independent Poisson process with rate $\lambda_a$ tasks/ unit time. Proactively, devices attempt to offload generated tasks by sending instructions to a MEC server collocated with the connected BS. In our system, grant-free access is assumed, where each device attempts to transmit its task instruction (i.e., not the whole task) using one of the available $C$ uplink channels randomly and uniformly without a scheduling grant from the \ac{BS} \cite{Mahmood2019}. Furthermore, let $\kappa = \frac{\lambda_d}{\lambda_b C}$ denote the average number of devices per \ac{BS} per channel and $T_s$ the transmission time of a given task's instruction. A task instruction is successfully decoded at the \ac{BS} if its received \ac{SINR} is larger than a predefined threshold, $\theta$. The \ac{OSP} of a generic device, which is denoted by $\mathcal{O}$, quantifies the probability of successful task offloading as $ \mathcal{O}= \mathbb{P}\{\text{SINR}>\theta\}$.\footnote{ACK and NACK transmission latencies are ignored as they incur negligible amount compared to $T_s$ and the task's execution time.} In the case of decoding failure (i.e., NACK is received), the device opts to compute its task locally. Accordingly, we adopt a \textit{coverage-based} offloading feasibility criterion, in which the \ac{OSP} $\mathcal{O}$ governs the offloading feasibility, thus, the offloading decision problem and its underlying parameters are not considered and left for future work. Retransmissions at the devices are not considered in the proposed model to lower the aggregate network-wide interference.

\subsubsection{\textbf{Computing model}} The \ac{MEC} server residing at each BS is equipped with a single \ac{PM} that encompasses $M_{\text{MEC}}$ \acp{VM} for parallel task computing. To account for resource sharing among the \acp{VM} (e.g., buses for I/O, CPU, memory), input/output (I/O) interference is observed within the \ac{PM} at the \ac{MEC} server. Thus, the parallel-operating \acp{VM} interfere with each other, leading to a degraded computation power \cite{Bruneo2014}. For the case of a single \ac{VM} deployment, the task's execution rate is modeled via a Poisson process with a rate of $\mu_o$ tasks/ unit time. However, to account for the I/O interference among the $M_{\text{MEC}}$ \acp{VM}, the task's execution rate of a given VM depends on the total number of \acp{VM} as follows
\vspace{-2mm}
\begin{equation}
\mu_{\text{MEC}} = \frac{\mu_o}{(1+d)^{M_{\text{MEC}}-1}},
\end{equation}
where $d$ is the computation degradation factor due to I/O interference among the $M_{\text{MEC}}$ \acp{VM} \cite{Ko2018, Bruneo2014}. For local computation of tasks, devices are assumed to be equipped with a local \ac{PM} that accommodates a single \ac{VM} (i.e., $M_{\text{loc}} = 1$, thus, no parallel processing), where the local execution rate is modeled via a Poisson process with rate $\mu_{\text{loc}}$. Moreover, a task to be computed is blocked if no VM is idle (locally or at the MEC server in case of offloading). To investigate the relative ratio between the MEC and the local computation capabilities, we define $\mu_r = {\mu_{\text{MEC}} }/{\mu_{\text{loc}}}$ which denotes the relative computation rate such that $\mu_r >> 1$.

\subsubsection{\textbf{Failure \& repair model}} Due to possible hardware and software faults, the proposed model accounts for events of \ac{VM} failures and their repairment times \cite{Fu2009,Birke2014}. The failure (repair) rate of a given \ac{VM} is modeled via a Poisson process with rate $\delta$ ($\gamma$) failure (repairment) events/ unit time.\footnote{The Poisson model is adopted in our work for task-related parameters to provide a good compromise between practical consideration of real-time events and mathematical tractability \cite{Ko2018, Lee2018}.} \acp{VM} are prone to failure regardless of being idle or occupied. A failed idle \ac{VM} is labeled as out of operation and cannot admit future tasks. Upon the failure of an occupied \ac{VM}, the \ac{PM} will handover the running task to an idle \ac{VM}, if one exists. If not, the running task is discarded and the concerned device is notified via downlink signaling. The considered system model is visualized in Fig. \ref{fig:system_model}, where one can observe a plethora of devices belonging to three categories, namely, idle, offloading and local execution devices. Focusing on a selected cell that serves a number of offloading devices, a VM fails while being in service. In this case, the task being served by this VM is transferred to an idle \ac{VM} to resume its execution. Meanwhile, based on the repair rate, the failed \ac{VM} goes back into operation to serve newly incoming tasks. 
	\vspace{-4pt}
\section{Spatial System Analysis}\label{SG_analysis}
In our work, we focus on critical applications that are sensitive to availability and reliability of the computation resources (e.g., smart agriculture, smart homes \cite{Porambage2018}), whereas latency-critical applications are left for future work. Upon task generation, the task instructions are sent to the MEC host co-located with the connected BS by uplink transmissions. Those instructions are correctly decoded, and hence the task is successfully offloaded, if the received SINR is greater than $\theta$. Otherwise, the device executes the task locally. To characterize the offloading feasibility within the network, the \ac{OSP} of a randomly selected device considering the network-wide mutual interference is
\vspace{-1mm}
\begin{align}
\mathcal{O}&=\mathbb{P}\left\{\frac{\rho h_{0}}{  \sum_{y_n \in \mathrm{\Phi}\setminus y_o} a_n P_n g_n ||y_n - z_o||^{-\eta}  +\sigma^{2}}>\theta \right\}, \\ \label{eq1}
&\myeqa\text{exp}\left \{ -\frac{\sigma^2\theta}{\rho} \right\} \mathcal{L}_{I_{\text{out}}}\left  (\frac{\theta}{\rho} \right) \mathcal{L}_{I_{\text{in}}}\left  (\frac{\theta}{\rho}\right).
\end{align} 
where $h_o$ is the channel gain between the intended device located at $y_o$ and its serving \ac{BS} located at $z_o$, $||.||$ is the Euclidean norm, $y_n$ is the $n$-th device location in the network excluding the intended device, $P_n$ is its transmit power, $g_n$ is the channel power gain between this interfering device and the intended BS, $\sigma^2$ is the noise power and $a_{n}$ equals one if the $n$-th device is transmitting on the same channel as the intended device, and zero otherwise. In addition, $(a)$ results from the exponential distribution of $h_o$ combined with the path loss inversion power control, where $\mathcal{L}_{I_{\text{out}}}(\cdot)$ and $\mathcal{L}_{I_{\text{in}}}(\cdot)$ represent the Laplace transform (LT) of the aggregate intra-cell and inter-cell interference, respectively. To provide an uplink tractable analysis, we assume that the spatial correlations between adjacent Voronoi cell areas are ignored, thus, the transmission powers of the devices are independent and identically distributed \cite{Gharbieh2017}. The aforementioned approximations are validated in Section V against independent Monte Carlo simulations. In order to quantify the total arrival rate of offloaded tasks at the \ac{MEC} server, the \ac{OSP} of each device is first calculated in the following theorem.
\begin{theorem}\label{theorem_SG}
	The \ac{OSP} for a generic device is given by
	\begin{align}\label{eq:shared}
	&\mathcal{O} \approx  \frac{\exp \left\{-\frac{\sigma^{2} \theta}{\rho}-\frac{2 \theta P_a \kappa}{(\eta-2)} {}_2F_{1}(1,1-2 / \eta, 2-2 / \eta,-\theta)\right\}}{\left(1+\frac{\theta P_a \kappa}{(1+\theta) c}\right)^{c}} \nonumber \\
	&\;\; \myeqb \; \;\frac{\exp \left\{-\frac{\sigma^{2} \theta}{\rho}- P_a \kappa \sqrt{\theta} {\arctan}\Big(\sqrt{\theta}\Big)\right\}}{\left(1+\frac{\theta P_a \kappa}{(1+\theta) c}\right)^{c}},
	\end{align}
	where $P_a=1-\text{e}^{-\left(2T_s\lambda_a\right)}$ is the device's active probability within $[-T_s,T_s]$, $\kappa = \frac{\lambda_d}{\lambda_b C}$, ${}_2F_{1}(\cdot)$ is the Gaussian hypergeometric function and $c = 3.575$. The approximation is due to the employed approximate probability distribution function (PDF) of the PPP Voronoi cell area in $\mathbb{R}^2$.
\end{theorem}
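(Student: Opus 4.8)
\emph{Proof plan.} The plan is to evaluate the two Laplace transforms appearing in step $(a)$ separately, since the noise factor $\exp\{-\sigma^2\theta/\rho\}$ is already explicit once we set $s=\theta/\rho$. The decisive structural observation is that intra-cell and inter-cell interferers behave very differently under full path-loss inversion: an intra-cell interferer shares the serving BS $z_o$, so its received power at $z_o$ is \emph{exactly} $\rho g_n$, whereas an inter-cell interferer inverts the path loss toward a \emph{different} BS and therefore arrives with power $\rho R_n^{\eta}\lVert y_n-z_o\rVert^{-\eta}$, where $R_n$ is its own link distance.

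First I would handle $\mathcal{L}_{I_{\text{in}}}$. Because every intra-cell co-channel interferer contributes $\rho g_n$ with $g_n$ unit-mean exponential, each such active interferer multiplies the transform by $\mathbb{E}_g[e^{-(\theta/\rho)\rho g}]=1/(1+\theta)$. Conditioned on the serving cell having area $A$, the number of active co-channel interferers is Poisson with mean $\lambda_d A P_a/C$ (Poisson devices, thinned by the per-channel selection $1/C$ and the activity probability $P_a$), so the conditional transform is the Poisson PGF $\exp\{(\lambda_d A P_a/C)(1/(1+\theta)-1)\}$. I would then average over $A$ using the gamma approximation of the normalized $\mathbb{R}^2$ Voronoi-cell area with shape and rate $c=3.575$; since this average is the moment generating function of a unit-mean gamma law evaluated at $-\theta P_a\kappa/(1+\theta)$ (with $\kappa=\lambda_d/(\lambda_b C)$), it collapses to $(1+\theta P_a\kappa/((1+\theta)c))^{-c}$. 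This single step is the sole source of the ``$\approx$'' in the theorem.

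Next I would compute $\mathcal{L}_{I_{\text{out}}}$ by treating the active co-channel inter-cell interferers as a PPP of intensity $P_a\lambda_d/C$, marking each point with an independent Rayleigh link distance $R$ (contact-distance law, $\mathbb{E}[R^2]=1/(\pi\lambda_b)$) and imposing the association cut-off $\lVert y_n-z_o\rVert\ge R$. Applying the PGFL and integrating out the exponential $g_n$ yields the integrand $\tfrac{\theta R^{\eta}r^{-\eta}}{1+\theta R^{\eta}r^{-\eta}}$; the substitution $t=(R/r)^{\eta}$ maps $r\in[R,\infty)$ onto $t\in(0,1]$ and produces $\tfrac{2\pi\theta R^2}{\eta}\int_0^1 t^{-2/\eta}(1+\theta t)^{-1}\,dt$. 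Averaging over $R$ replaces $R^2$ by $1/(\pi\lambda_b)$, turning the prefactor into $2\theta P_a\kappa/\eta$, and recognizing $\int_0^1 t^{-2/\eta}(1+\theta t)^{-1}\,dt$ as Euler's integral representation of ${}_2F_1$ recovers $\exp\{-\tfrac{2\theta P_a\kappa}{\eta-2}\,{}_2F_1(1,1-2/\eta,2-2/\eta,-\theta)\}$. Multiplying the three factors gives the first line of \eqref{eq:shared}; the $\eta=4$ line then follows from $\tfrac{2\theta}{\eta-2}\big|_{\eta=4}=\theta$ together with the identity ${}_2F_1(1,\tfrac12,\tfrac32,-\theta)=\arctan(\sqrt\theta)/\sqrt\theta$.

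The main obstacle is the inter-cell term: the exact uplink interference field is spatially correlated through the shared power-control map, and the interferer positions and their serving-BS distances are not truly independent. I would lean on the standard decoupling approximation (i.i.d.\ transmit powers across cells, justified by the cited independence of adjacent Voronoi-cell areas) and encode association only through the cut-off $r\ge R$; verifying that this device-level approximation, together with the gamma cell-area fit used for $\mathcal{L}_{I_{\text{in}}}$, remains accurate is precisely what the Monte Carlo comparison in Section~V is intended to confirm.
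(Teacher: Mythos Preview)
Your proposal is correct and follows essentially the same route as the paper: factor the OSP into noise, intra-cell, and inter-cell parts, handle $\mathcal{L}_{I_{\text{in}}}$ via a Poisson count mixed over the gamma-approximated Voronoi area (equivalently the paper's negative-binomial PMF for $\mathcal{N}_d$), and handle $\mathcal{L}_{I_{\text{out}}}$ via the PGFL under the i.i.d.\ transmit-power decoupling, then specialize to $\eta=4$. The only cosmetic differences are your choice of substitution $t=(R/r)^\eta$ versus the paper's $y=r/((s\rho)^{1/\eta}R)$, and that you average over the cell area while the paper sums over the induced PMF---both are algebraically equivalent; note also that, contrary to your ``sole source'' remark, the paper (like you, later in the paragraph) explicitly flags the i.i.d.\ power assumption as a second approximation in $\mathcal{L}_{I_{\text{out}}}$.
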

\begin{proof}
	Since full channel inversion power control with threshold $\rho$ is employed, the received power from the devices at a given \ac{BS} equals $\rho$ and the interference power from the neighboring devices is strictly lower than $\rho$. Thus, the LT of the aggregate inter-cell interference at the serving \ac{BS} is 
	\begin{equation}\label{eq:laplace_1}
	\mathcal{L}_{I_{\text {out}}}(s) \approx \exp \Big(-2 \pi P_a \lambda_d s^{\frac{2}{\eta}} \mathbb{E}\left\{[P^{\frac{2}{\eta}}\right\} \int_{(s \rho)^{\frac{-1}{\eta}}}^{\infty} \frac{y}{y^{\eta}+1} d y\Big),
	\end{equation}
	where $P_a\lambda_d$ represent the portion of active devices within the network and the approximation is due to the assumed independent transmission powers of the devices. The LT of the inter-cell interference can be evaluated as \cite[Lemma 1]{Gharbieh2017}
	\begin{align}\label{eq:laplace_2}
	\mathcal{L}_{I_{\text {in}}}(s) &\approx \mathbb{P}\{\mathcal{N}_d=0\} + \sum_{n=1}^{\infty} \frac{\mathbb{P}\{\mathcal{N}_d = n \}}{(1+s \rho)^{n}} 
	\end{align}
	where $\mathbb{E}\{\cdot\}$ is the expectation operation, $\mathcal{N}_d$ is a random variable representing the number of neighboring devices with the probability mass function $\mathbb{P}\{\mathcal{N}_d = n \} \approx \frac{\lambda_d^{n}(\lambda_b c)^{c}\mathrm{\Gamma}(n+c)}{(\lambda_d+\lambda_b c)^{n+c}\mathrm{\Gamma}(n+1) \mathrm{\Gamma}(c)}$, where $\mathrm{\Gamma}(\cdot)$ is the gamma function and $c=3.575$ is a constant defined to approximate the Voronoi cell. The theorem is proved by plugging (\ref{eq:laplace_1}) and (\ref{eq:laplace_2}) into (\ref{eq1}), followed by similar steps as done in \cite[Lemma 1]{Gharbieh2017}.
\end{proof}
Once $\mathcal{O}$ is evaluated, we can now define and evaluate the related task execution \acp{KPI} for the case of offloaded and locally executed tasks as explained in the following section. 
	\vspace{-4pt}
\section{Temporal Computational Analysis}\label{QT_analysis}

\begin{table*}[t]
	\centering
	\caption{State transitions $z=(x_I, x_O , x_F)$ of the VMs.}
	\renewcommand{\arraystretch}{1.1}
	\begin{tabular}{l l l l}
		\hline 			
		\textbf{Event} 	& \textbf{Destination state} & \textbf{Transition rate} & \textbf{Necessary condition}   \\ \hline \hline
		1- Task arrival and an idle \ac{VM} is allocated	&  $(x_I-1, x_O+1, x_F)$& $\lambda_v$ & $x_I>0$   \\		
		2-  Successful task execution at an occupied \ac{VM} &  $(x_I+1, x_O-1 , x_F)$ & $x_O\mu_v$ & $x_O > 0$   \\
		3- An idle \ac{VM} fails	&  $(x_I-1, x_O , x_F+1)$ & $x_I\delta$ & $x_I>0$   \\		
		4- An occupied \ac{VM} fails. Task is offloaded to another idle \ac{VM}	&  $(x_I-1, x_O-1, x_F+1)$& $x_O\delta$ & $x_O>0 \; \& \; x_I>0$   \\		
		5- An occupied \ac{VM} fails and task is aborted &  $(x_I, x_O-1, x_F+1)$ & $x_O\delta$ &$x_O>0 \; \& \; x_I=0$ \\		
		6- A failed \ac{VM} is repaired &  $(x_I+1, x_O , x_F-1)$ & $x_F\gamma$ & $x_F > 0$ \\
		\hline \hline
	\end{tabular}
	\label{Table:Q_states}
\end{table*}

As explained earlier, the \ac{OSP} provides an offloading feasibility assessment via controlling the aggregate load of tasks at the \ac{MEC} server. That is, the total average arrival rate of tasks to be computed at the \ac{MEC} server is $\lambda_\text{MEC} = \mathcal{O}\lambda_a \mathbb{E}\left\{ \mathcal{N}_d \right\}= \frac{\mathcal{O}\lambda_a\lambda_d}{\lambda_b}$. On the other hand, the average arrival rate of tasks to be locally computed is $\lambda_\text{loc} = \bar{\mathcal{O}}\lambda_a$ tasks/ unit time, where $\bar{\mathcal{O}} = 1-\mathcal{O}$. To analyze the temporal occupancy of the \acp{VM} either locally or at the \ac{MEC} server, we employ tools from queueing theory. To construct the proposed \ac{CTMC}, we first determine the system's state space. A general state of our model is represented by the tuple $z=(x_I, x_O , x_F)$; where $x_i; i\in\{I,O,F\}$ represents the number of \acp{VM} that are idle, occupied and failed, respectively. Let $\mathcal{S}_v=\left\{ z | \sum_j x_j = M_v; j\in\{I,O,F\} \right\}$ denote the state space, where $v\in\{\text{MEC}, \text{loc}\}$ denotes the \ac{MEC} and local systems. The steady state equations can be vectorized as $\bm{\tau}_v =[\tau_1 \; \tau_2 \; \cdots \; \tau_{\ell} \; \cdots \tau_{|\mathcal{S}_v|}]$, where $\tau_{\ell}$ is the probability of being in the $\ell$-th state. For full temporal characterization, we need to construct the state transition matrix $\bm{Q}_v$. For each system $v$, $\bm{Q}_v$ constitutes the transition rates associated with different states. To systematically construct $\bm{Q}_v$, while taking into account the different temporal events, Table \ref{Table:Q_states} is utilized, which entails the transition rates and conditions among different system states. Focusing in this work on the steady state solution, the steady state probabilities are evaluated via solving $\bm{\tau}_v\bm{Q}_v = 0,$ and $ \sum_{z\in \mathcal{S}_v} \bm{\tau}_v(z) =1.$ Let $\bm{1}$ and $\mathbf{\mathcal{I}}$ denote the all ones vector and the all ones matrix, with the appropriate sizes respectively, then, $\bm{\tau}_v$ equals $\bm{\tau}_v = \bm{1}(\bm{Q}_v + \mathbf{\mathcal{I}})^{-1}.$ Once the solution $\bm{\tau}_v$ is obtained, several dependability-based \acp{KPI} can be assessed. First, we consider the \textit{\ac{CRA}}. This metric quantifies the probability that an incoming device's task, either locally managed or offloaded to the MEC server, finds a vacant computational resource.  First, let $\mathcal{N}_{v} = \left\{z| x_I = 0, z \in \mathcal{S}_v \right\}$ denote all states with no idle \acp{VM}. Then, the \ac{CRA}, denoted as $A$, can be evaluated as 
\begin{equation}\label{avail}
A = \mathcal{O}\left(1-\sum_{ \substack{ z \in \mathcal{N}_{\text{MEC}}  } } \bm{\tau}_{\text{MEC}}(z)\right) + \bar{\mathcal{O}}\left(1-\sum_{ \substack{ z \in \mathcal{N}_\text{loc}  } } \bm{\tau}_{\text{loc}}(z)\right).
\end{equation}
Another important \ac{KPI} that quantifies the degree of successful task execution, is the \textit{\ac{TEC}}. Let $\mathcal{C}_{v} = \left\{z| x_O > 0, z \in \mathcal{S}_v\right\}$ denote all states with at least a single occupied \ac{VM}. The \ac{TEC} considers such states to evaluate the system's capability to perform task execution successfully. Denoted by $C$, the  \ac{TEC} can be computed as
\begin{equation}\label{shared} 
C =  \mathcal{O}\mu_{\text{MEC}} \!\!\! \sum_{  \substack{ z \in \mathcal{C}_{\text{MEC}}  } }\!\!\! x_O \bm{\tau}_{\text{MEC}}(z) + \bar{\mathcal{O}}\mu_{\text{loc}}\!\!\! \sum_{  \substack{ z \in \mathcal{C}_\text{loc}  } }\!\!\! x_O \bm{\tau}_{\text{loc}}(z).
\end{equation}
Finally, we consider the \textit{\ac{TER}}, which is defined as the probability that a task, once assigned to a \ac{VM}, will be computed successfully without interruption \cite{Balapuwaduge2018}. Mathematically, the \ac{TER}, denoted by $R_v$, can be evaluated as $R_v = 1-\frac{F_v}{\Lambda_v}$, where $F_v$ denotes the mean forced termination rate of ongoing tasks and $\Lambda_v$ is the effective rate in which a new task is assigned to an idle \ac{VM}. The latter can be computed similar to \eqref{avail} as $\Lambda_v=\lambda_v \left(1-\sum_{\substack{ z \in \mathcal{N}_v }} \bm{\tau}_v(z)\right)$. Moreover, let $\mathcal{F}_v = \mathcal{C}_v \cup \mathcal{N}_v$ denote all states with at least a single occupied \ac{VM} and no idle \acp{VM}. Tasks that are interrupted  in those states, because of VM failures, are dropped. Finally, $F_v$ and the \ac{TER} are evaluated as 
\begin{align}
F_v &= \delta \sum_{ \substack{ z \in \mathcal{F}_v  } } (M_v-x_F) \bm{\tau}_v(z),\\  
R &= \mathcal{O} \left( 1 - \frac{F_{\text{MEC}}}{\Lambda_{\text{MEC}}} \right) + \bar{\mathcal{O}} \left( 1 - \frac{F_{\text{loc}}}{\Lambda_{\text{loc}}} \right).
\end{align}
	\section{Numerical Results}\label{simulation_results}

This section aims to numerically evaluate the proposed task execution service dependability KPIs focusing on the studied MEC-enabled network. Unless otherwise stated, the list of involved network parameters are summarized in Table \ref{Table:simulation_parameters}. 

\begin{figure}
	\begin{algorithm}[H]\label{alg:iterative}
		\caption{Optimal number of deployed VMs computation.}
		\begin{algorithmic}
			\Procedure{}{$P_a, \lambda_a,\lambda_b,\lambda_d, M_{\text{MEC}}, M_{\text{loc}}, \mu_o, \mu_{\text{loc}}, d, \gamma, \delta $} 
			\LState Set $m = 1, C(0) = -\infty$, and compute $C(m)$ \Comment{$R(m)$ implies  computing $C$ in \eqref{shared} with $M_{\text{MEC}}=m$.}
			\While {$C(m)>C(m-1)$} 
			\LState Compute $C(m)$ from \eqref{shared}. 					 
			\LState Increment $m$.
			\EndWhile
			\LState \Return  $M_{\text{MEC}}^{*} = m$ and $C^*=C(M_{\text{MEC}}^{*})$.
			\EndProcedure
		\end{algorithmic}
	\end{algorithm}
\end{figure}

\begin{table}
	\centering
	\caption{Simulation parameters.}
	\renewcommand{\arraystretch}{1}
	\resizebox{0.9\columnwidth}{!}{
		\begin{tabular}{|l | l |}
			\hline 			
			\textbf{Parameter} & \textbf{value}  \\ \hline \hline
			Average number of BSs (devices) ($\lambda_b\; (\lambda_d)$) & 1 (64) BS (device)/ 10 $\text{km}^2$\\ \hline			
			Number of VMs ($M_{\text{MEC}}, M_{\text{loc}} $) & 5, 1 \\ \hline
			Number of uplink channels ($C$) & 16 \\ \hline
			Uplink power control threshold ($\rho$) & -90 dBm \\ \hline
			Path-loss exponent ($\eta$) & 4 \\ \hline
			Noise power ($\sigma^2$) & -110 dBm \\ \hline
			Detection threshold ($\theta$) & -10 dB  \\ \hline
			Task arrival rate per device ($\lambda_a$) & 0.15 tasks/ unit time \\ \hline	
			Single VM execution rate ($\mu_o$) & 3 tasks/ unit time \\ \hline	
			Local execution rate ($\mu_{\text{loc}}$) & 0.1 tasks/ unit time \\ \hline			
			VM repair rate ($\delta$) & 1 events/ unit time\\ \hline
			
			VM failure rate ($\gamma$) & 0.1 events/ unit time\\  \hline
			VM I/O degradation factor ($d$) & $0.1$ \\   \hline
	\end{tabular}}
	\label{Table:simulation_parameters}
\end{table}

Fig. \ref{fig:TSP_verf} shows the \ac{OSP} as a function of the decoding threshold $\theta$ for different device active probabilities $P_a$. The close match between the simulation and the proposed analytical framework validates the analysis and justifies the considered approximations. For increasing values of $\theta$, the \ac{OSP} decreases due to higher requirement on the link quality. For increasing values of $P_a$, the rate of task generation at the devices as well as their the probability to utilize the same uplink channel increases, thus network-wide mutual interference increases, hence, leading to lower achievable \acp{OSP}.

Focusing on the introduced \acp{KPI} in Section \ref{QT_analysis}, Fig. \ref{fig:SS_KPIs} showcases the system's performance for increasing values of $\theta$ with different system parameters. Generally, as $\theta$ increases, the \ac{OSP} decreases, thus, owing to the coverage-based offloading criterion, more devices opt to execute their tasks locally. Depending on $\mathcal{O}$, which depends on $\theta$ among other parameters, the network oscillates between an \textit{offloading-dominant} and a \textit{local execution-dominant} regime. In Fig. \ref{fig:SS_KPIs}(a), we observe that the \ac{CRA} keeps increasing till a cut-off threshold (i.e., $\theta=-6, -7$ and -8 dB for $\mu_r=20,40,80$, respectively). Operating above these threshold values, the network transitions to the local execution-dominant regime. As $\mu_r$ decreases, the CRA performance gap between the two regimes decreases, since the computational capabilities of the MEC server and device become comparable. Fig. \ref{fig:SS_KPIs}(b) presents the \ac{TER} for different per-device task arrival rates. As $\lambda_a$ increases, the contention on the radio and the computational resources increases, leading to degradation in the \ac{TER}. Fig. \ref{fig:SS_KPIs}(c) shows the \ac{TEC} for different densification ratios (i.e., average number of devices per BS per channel). In the offloading-dominant regime, high values of \ac{TEC} are achieved since the offloaded tasks leverage the computationally capable \ac{MEC} server. However, in the local execution-dominant regime, \ac{TEC} degrades till it reaches zero. We observe also the effect of $\kappa$ on the slope steepness of each curve.

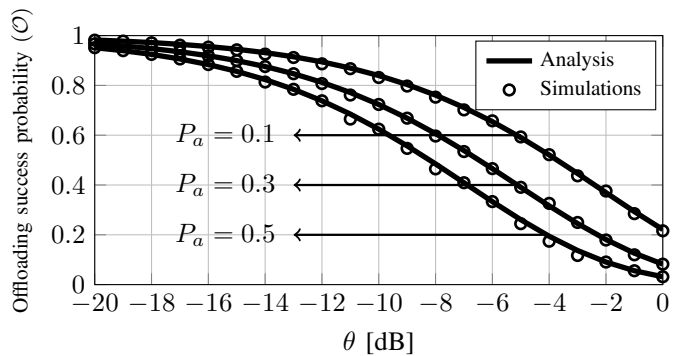
\begin{figure}
	\centering
%
%

%
\begin{tikzpicture}

\begin{axis}[%
width=0.85\columnwidth,
height=1.3in,
scale only axis,
xmin=-20,
xmax=0,
xlabel={$\theta\text{ [dB]}$},
ymin=0,
ymax=1,
ylabel={\footnotesize Offloading success probability $(\mathcal{O})$},
xmajorgrids,
ymajorgrids,
legend style={font = \footnotesize, legend columns=1,, at={(0.67,0.7)}, anchor=south west, legend cell align=left, align=left}
]
\addplot [color=black, line width=2.0pt]
table[row sep=crcr]{%
	-20	0.982214997224483\\
	-19	0.97767909346394\\
	-18	0.972008416114015\\
	-17	0.964931655742144\\
	-16	0.956119748425558\\
	-15	0.945177595029111\\
	-14	0.931636958096453\\
	-13	0.914952274570494\\
	-12	0.894501996009063\\
	-11	0.869599118072053\\
	-10	0.839515642031658\\
	-9	0.803526497316925\\
	-8	0.760978377850459\\
	-7	0.711387197881334\\
	-6	0.654563558441237\\
	-5	0.590758122866135\\
	-4	0.5208083580685\\
	-3	0.446256249578062\\
	-2	0.369396209155248\\
	-1	0.29320715568098\\
	0	0.221126969820591\\
};
\addlegendentry{\footnotesize Analysis}

\addplot [color=black, line width=1.0pt, draw=none,only marks, mark=o, mark options={solid, black}]
table[row sep=crcr]{%
	-20	0.981667717777389\\
	-19	0.977961795278785\\
	-18	0.971161022319486\\
	-17	0.96151435933182\\
	-16	0.953788102785126\\
	-15	0.942934634095962\\
	-14	0.925738666876877\\
	-13	0.912023906491306\\
	-12	0.886619786036807\\
	-11	0.867171978117695\\
	-10	0.831760315611401\\
	-9	0.797341025459253\\
	-8	0.752725272396118\\
	-7	0.701685845202753\\
	-6	0.65775678356949\\
	-5	0.593074724945114\\
	-4	0.52181742534209\\
	-3	0.436843857220821\\
	-2	0.375785846142525\\
	-1	0.28493810381954\\
	0	0.216093505604205\\
};
\addlegendentry{\footnotesize  Simulations}

\addplot [color=black, line width=2.0pt, forget plot]
table[row sep=crcr]{%
	-20	0.966742910548297\\
	-19	0.958368250628523\\
	-18	0.94795937900219\\
	-17	0.935064145055924\\
	-16	0.919153222096839\\
	-15	0.899620056874611\\
	-14	0.87578948019292\\
	-13	0.846939862116675\\
	-12	0.812344760508986\\
	-11	0.771340281420526\\
	-10	0.723422839029426\\
	-9	0.668377345413219\\
	-8	0.606426858987413\\
	-7	0.538381254961666\\
	-6	0.465747006543795\\
	-5	0.390748910901256\\
	-4	0.316216841075111\\
	-3	0.24531441865963\\
	-2	0.181131622165879\\
	-1	0.126216206470504\\
	0	0.0821563335472281\\
};
\addplot [color=black, line width=2.0pt, forget plot]
table[row sep=crcr]{%
	-20	0.951530331498729\\
	-19	0.939463368642236\\
	-18	0.924543304667195\\
	-17	0.906179455985228\\
	-16	0.883704907178762\\
	-15	0.856392799095033\\
	-14	0.823490457943781\\
	-13	0.784278282365348\\
	-12	0.738159879078127\\
	-11	0.684786768286958\\
	-10	0.624213371349627\\
	-9	0.557064704093202\\
	-8	0.484681017531206\\
	-7	0.40918551525906\\
	-6	0.333413666719292\\
	-5	0.260659121192642\\
	-4	0.19424028807875\\
	-3	0.13696467916073\\
	-2	0.0906336970922457\\
	-1	0.0557469907798821\\
	0	0.0315100769490762\\
};
\addplot [color=black, line width=1.0pt, draw=none,only marks, mark=o, mark options={solid, black}]
table[row sep=crcr]{%
	-20	0.966742910548297\\
	-19	0.958368250628523\\
	-18	0.94795937900219\\
	-17	0.935064145055924\\
	-16	0.919153222096839\\
	-15	0.899620056874611\\
	-14	0.87578948019292\\
	-13	0.846939862116675\\
	-12	0.807344760508986\\
	-11	0.761340281420526\\
	-10	0.723422839029426\\
	-9	0.668377345413219\\
	-8	0.5966426858987413\\
	-7	0.534381254961666\\
	-6	0.465747006543795\\
	-5	0.390748910901256\\
	-4	0.326216841075111\\
	-3	0.249531441865963\\
	-2	0.1791131622165879\\
	-1	0.1196216206470504\\
	0	0.0821563335472281\\
};
\addplot [color=black, line width=1.0pt, draw=none,only marks, mark=o, mark options={solid, black}]
table[row sep=crcr]{%
	-20	0.951530331498729\\
	-19	0.939463368642236\\
	-18	0.924543304667195\\
	-17	0.906179455985228\\
	-16	0.883704907178762\\
	-15	0.856392799095033\\
	-14	0.813490457943781\\
	-13	0.784278282365348\\
	-12	0.738159879078127\\
	-11	0.664786768286958\\
	-10	0.624213371349627\\
	-9	0.547064704093202\\
	-8	0.464681017531206\\
	-7	0.40918551525906\\
	-6	0.333413666719292\\
	-5	0.244659121192642\\
	-4	0.17424028807875\\
	-3	0.11696467916073\\
	-2	0.0906336970922457\\
	-1	0.0557469907798821\\
	0	0.0315100769490762\\
};

\draw[->, line width=0.3mm](rel axis cs:0.75,0.6) -- (rel axis cs:0.35,0.6);
\node at (rel axis cs:0.23,0.6) {$P_a = 0.1$};

\draw[->, line width=0.3mm](rel axis cs:0.74,0.4) -- (rel axis cs:0.35,0.4);
\node at (rel axis cs:0.23,0.4) {$P_a = 0.3$};

\draw[->, line width=0.3mm](rel axis cs:0.79,0.2) -- (rel axis cs:0.35,0.2);
\node at (rel axis cs:0.23,0.2) {$P_a = 0.5$};

\end{axis}
\end{tikzpicture}%
	\caption{OSP model verification.} 
	\label{fig:TSP_verf}
\end{figure}

\begin{figure*}
	\centering
	\definecolor{mycolor1}{rgb}{1.00000,0.00000,1.00000}%

\begin{tikzpicture}[scale=0.9]
\begin{groupplot}[group style={
	group name=myplot2,
	group size= 3 by 1, , horizontal sep=1.7cm}, height=1.3in,width=2in]

\nextgroupplot[title={{(a)}},
scale only axis,
xmin=-20,
xmax=5,
xlabel={$\theta\text{ [dB]}$},
ylabel={\footnotesize Computation resource availability ($A$)},
ymin=0.4,
ymax=0.85,
ytick={0.4,0.5,0.6,0.7,0.8,0.9},
xmajorgrids,
ymajorgrids,
legend style={at={(0.01,0.01)}, anchor=south west, legend cell align=left, align=left}]

\addplot [color=black, dashdotted, line width=2.0pt]
table[row sep=crcr]{%
	-20	0.720487041419672\\
	-19	0.723224893448657\\
	-18	0.726531830993501\\
	-17	0.730482879450822\\
	-16	0.735138439147541\\
	-15	0.740527298911908\\
	-14	0.746622403342652\\
	-13	0.753308694025054\\
	-12	0.760343860321288\\
	-11	0.767315423217782\\
	-10	0.773601225252982\\
	-9	0.778344700035218\\
	-8	0.780460421898842\\
	-7	0.778688296203768\\
	-6	0.771714857865798\\
	-5	0.75837437243274\\
	-4	0.737923765174601\\
	-3	0.710344240572177\\
	-2	0.676561368886553\\
	-1	0.638436704448543\\
	0	0.598445473140832\\
	1	0.55913456112031\\
	2	0.522613056169717\\
	3	0.49028931439933\\
	4	0.462873787975075\\
	5	0.440523700702284\\
};
\addlegendentry{$\mu_r = 80$}

\addplot [color=black, dashed, line width=2.0pt]
table[row sep=crcr]{%
	-20	0.720619110001032\\
	-19	0.72342986972527\\
	-18	0.726848339235296\\
	-17	0.730968548353907\\
	-16	0.735877997965285\\
	-15	0.741643097149133\\
	-14	0.748287275609108\\
	-13	0.755760379448649\\
	-12	0.763899067719331\\
	-11	0.772380212114348\\
	-10	0.780673046631695\\
	-9	0.788000954470177\\
	-8	0.793329651414187\\
	-7	0.795403670540462\\
	-6	0.792854941056955\\
	-5	0.784401415482707\\
	-4	0.769132422908712\\
	-3	0.746831901419854\\
	-2	0.71822555883726\\
	-1	0.684997707181654\\
	0	0.649486465360019\\
	1	0.614149157883539\\
	2	0.581051256538092\\
	3	0.551596049644028\\
	4	0.526517842710436\\
	5	0.506017438340578\\
};
\addlegendentry{$\mu_r = 40$}

\addplot [color=black, line width=2.0pt]
table[row sep=crcr]{%
	-20	0.720711030481192\\
	-19	0.7235729675849\\
	-18	0.727070126191864\\
	-17	0.731310431085125\\
	-16	0.736401509031033\\
	-15	0.742438258540971\\
	-14	0.749483291203361\\
	-13	0.757538414203752\\
	-12	0.766506074995109\\
	-11	0.77614159156052\\
	-10	0.786000524133889\\
	-9	0.795390957094578\\
	-8	0.803347403066049\\
	-7	0.808650096348953\\
	-6	0.809917527195646\\
	-5	0.805795539355275\\
	-4	0.795244225835465\\
	-3	0.777875101516974\\
	-2	0.754221524418492\\
	-1	0.7257817032124\\
	0	0.694736055402464\\
	1	0.663426132235007\\
	2	0.633847468904654\\
	3	0.607375589321748\\
	4	0.584751364275531\\
	5	0.566207572209453\\
};
\addlegendentry{$\mu_r = 20$}

\nextgroupplot[title={{(b)}},
scale only axis,
xmin=-20,
xmax=5,
xlabel={$\theta\text{ [dB]}$},
ymin=0.55,
ymax=1,
ytick={0.6,0.7,0.8,0.8,0.9,1},
ylabel={\footnotesize Task execution retainability ($R$)},
xmajorgrids,
ymajorgrids,legend style={at={(0.01,0.01)}, anchor=south west, legend cell align=left, align=left}]

\addplot [color=black, dashdotted, line width=2.0pt]
table[row sep=crcr]{%
	-20	0.981047213414775\\
	-19	0.979319860309441\\
	-18	0.977162663272387\\
	-17	0.974474126347269\\
	-16	0.971131841933532\\
	-15	0.966989876144855\\
	-14	0.961876808820381\\
	-13	0.955595116088286\\
	-12	0.947922862838024\\
	-11	0.938618952316667\\
	-10	0.927433363442173\\
	-9	0.914123727536145\\
	-8	0.898479027490203\\
	-7	0.880349917221463\\
	-6	0.859683079607535\\
	-5	0.836554456947032\\
	-4	0.811193933626743\\
	-3	0.783993410265761\\
	-2	0.755492353940758\\
	-1	0.726340000122088\\
	0	0.697239951624858\\
	1	0.668888260288895\\
	2	0.641917648457727\\
	3	0.616857487440612\\
	4	0.594112950103361\\
	5	0.573960402757127\\
};
\addlegendentry{$\lambda_a = 0.1$}

\addplot [color=black, dashed, line width=2.0pt]
table[row sep=crcr]{%
	-20	0.992535025242799\\
	-19	0.991581705388793\\
	-18	0.990388782714571\\
	-17	0.98889831955255\\
	-16	0.987039650260924\\
	-15	0.984727297235941\\
	-14	0.981858947835103\\
	-13	0.978313764690216\\
	-12	0.973951445985145\\
	-11	0.968612622429335\\
	-10	0.962121345291643\\
	-9	0.9542905246651\\
	-8	0.944931119186189\\
	-7	0.933865526462103\\
	-6	0.920944857835423\\
	-5	0.906068585608595\\
	-4	0.889203637204\\
	-3	0.870398899074577\\
	-2	0.849791014829665\\
	-1	0.827598890838239\\
	0	0.804107352664939\\
	1	0.779643864901996\\
	2	0.75455450041933\\
	3	0.729185190205508\\
	4	0.703871731842969\\
	5	0.678938371791693\\
};
\addlegendentry{$\lambda_a = 0.05$}

\addplot [color=black, line width=2.0pt]
table[row sep=crcr]{%
	-20	0.99903479525155\\
	-19	0.998820391990741\\
	-18	0.998551581940294\\
	-17	0.998214909306132\\
	-16	0.997793787815366\\
	-15	0.997267881941806\\
	-14	0.996612426162276\\
	-13	0.995797511224996\\
	-12	0.994787389808281\\
	-11	0.993539884804863\\
	-10	0.992006019691325\\
	-9	0.990130025417999\\
	-8	0.987849898487395\\
	-7	0.985098668748179\\
	-6	0.981806455816072\\
	-5	0.977903227646077\\
	-4	0.973321926031672\\
	-3	0.968001343132686\\
	-2	0.961887932278662\\
	-1	0.954935759964107\\
	0	0.947104154181297\\
	1	0.938353237354751\\
	2	0.928638236547721\\
	3	0.917903941264476\\
	4	0.906080725657334\\
	5	0.893083190258864\\
};
\addlegendentry{$\lambda_a = 0.01$}

\nextgroupplot[title={{(c)}},
scale only axis,
xmin=-20,
xmax=5,
xlabel={$\theta\text{ [dB]}$},
ymin=0,
ymax=9,
ylabel={Task execution capacity ($C$)},
xmajorgrids,
ymajorgrids,
legend style={at={(0.62,0.55)}, anchor=south west, legend cell align=left, align=left}]

\addplot [color=black,dashdotted, line width=2.0pt]
table[row sep=crcr]{%
	-20	7.73356331896969\\
	-19	7.63749824490398\\
	-18	7.51851637920111\\
	-17	7.37175466734704\\
	-16	7.19164879369948\\
	-15	6.97201715395201\\
	-14	6.70627110066872\\
	-13	6.38780708647321\\
	-12	6.01064063669931\\
	-11	5.57033329220733\\
	-10	5.06523416305875\\
	-9	4.4980043373952\\
	-8	3.87732252815399\\
	-7	3.21959476010831\\
	-6	2.55036056117779\\
	-5	1.90463923218624\\
	-4	1.32424903563605\\
	-3	0.848951182236804\\
	-2	0.501627913239611\\
	-1	0.277635043462287\\
	0	0.149976236140814\\
	1	0.0852702749150046\\
	2	0.0560467850783228\\
	3	0.0443722689055174\\
	4	0.0403285656937944\\
	5	0.039168655863538\\
};
\addlegendentry{$\kappa = 8$}

\addplot [color=black, dashed, line width=2.0pt]
table[row sep=crcr]{%
	-20	6.48007194154133\\
	-19	6.43088494981177\\
	-18	6.36962560135794\\
	-17	6.29353975989867\\
	-16	6.19936210939661\\
	-15	6.08328760509202\\
	-14	5.94098353267968\\
	-13	5.76767013641561\\
	-12	5.55830720650914\\
	-11	5.30793217636897\\
	-10	5.01219775036553\\
	-9	4.66814645033553\\
	-8	4.27522513541995\\
	-7	3.83647200755199\\
	-6	3.35969169024703\\
	-5	2.85827321583577\\
	-4	2.35113969958327\\
	-3	1.86126201542571\\
	-2	1.41243520549442\\
	-1	1.02478841117605\\
	0	0.710541638323519\\
	1	0.47194171453663\\
	2	0.302318371301316\\
	3	0.189410903530124\\
	4	0.11913442772712\\
	5	0.0783956125340351\\
};
\addlegendentry{$\kappa = 4$}

\addplot [color=black, line width=2.0pt]
table[row sep=crcr]{%
	-20	2.34237804459237\\
	-19	2.33619846609198\\
	-18	2.32847147230896\\
	-17	2.31882617512752\\
	-16	2.306812103579\\
	-15	2.29188731020208\\
	-14	2.27340766504464\\
	-13	2.25061942083866\\
	-12	2.22265813380463\\
	-11	2.1885581536759\\
	-10	2.1472779090476\\
	-9	2.09774663006958\\
	-8	2.038937216704\\
	-7	1.96996675262466\\
	-6	1.89021996193996\\
	-5	1.79948191687313\\
	-4	1.69805655061448\\
	-3	1.5868412228657\\
	-2	1.46732999801552\\
	-1	1.34153264682207\\
	0	1.21182010120526\\
	1	1.08073076080069\\
	2	0.950783668585764\\
	3	0.824337362191082\\
	4	0.703510794457804\\
	5	0.590156865325128\\
};
\addlegendentry{$\kappa = 1$}

\end{groupplot}

\end{tikzpicture}
	\caption{Steady state (a) CRA with relative computation ratios (b) TER with task arrival rates (c) TEC with densification ratios.}
	\label{fig:SS_KPIs}
\end{figure*}
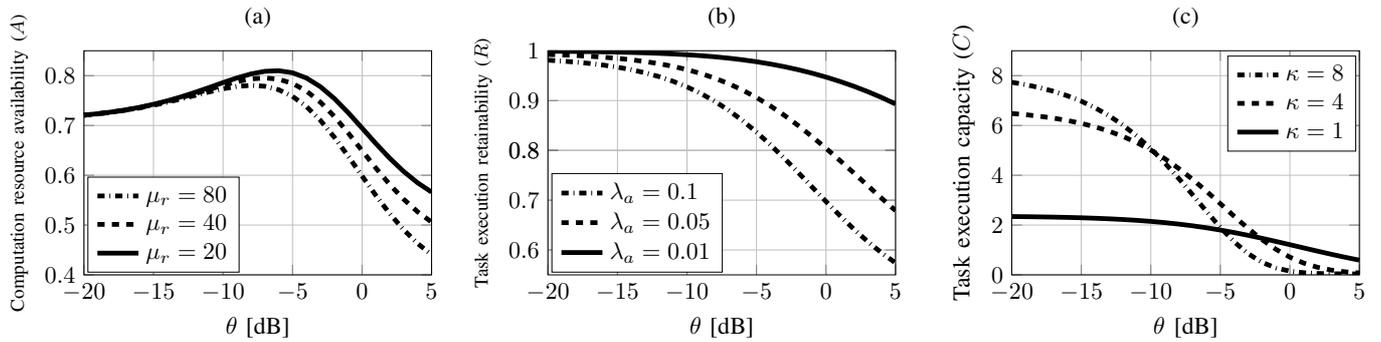

The computation resources scalability is depicted via Fig. \ref{fig:Scalable} which shows the \ac{TEC} as a function of the number of MEC server VMs $M_{\text{MEC}}$ and for three different values of the computation degradation factor $d$. The optimal number of deployed VMs for each value of $d$, calculated via Algorithm 1, which has a complexity of $O(M_{\text{MEC}})$, is shown via red circles. It is worth mentioning that the values present in Table \ref{Table:simulation_parameters} result in $P_a=0.25$ and $p = 0.83$. Thus, around 83\% of the active devices will offload their generated tasks  to the MEC server, thus, operating at the offloading-dominant regime. Nevertheless, due to the I/O interference between the employed \acp{VM} at the \ac{MEC} server, increasing $M_{\text{MEC}}$ beyond a given value, depending on the value of parameter $d$, leads to degradation in $\mu_{\text{MEC}}$ till the VM I/O interference dominates and the \ac{TEC} approaches zero. Such behavior also explains why as $d$ decreases, higher numbers of VMs are desirable. These performance results  provide network operators with important insights regarding the network dimensioning.

\begin{figure}
	\centering
%
%

%

\definecolor{mycolor1}{rgb}{0.07451,0.62353,1.00000}%
\definecolor{mycolor2}{rgb}{0.63529,0.07843,0.18431}%

\begin{tikzpicture}

\begin{axis}[%
width=0.8\columnwidth,
height=1.4in,
scale only axis,
xmin=2,
xmax=60,
xlabel={Number of VMs at the MEC server $(M_{\text{MEC}})$},
ymin=0,
ymax=7,
ylabel={\footnotesize Task execution capacity ($C$)},
ytick={0,1,2,3,4,5,6},
yminorticks=true,
xmajorgrids,
ymajorgrids]

\addplot [color=black, line width=2.0pt]
table[row sep=crcr]{%
	2	2.88163166188199\\
	4	4.51786670483521\\
	6	5.36239576238592\\
	8	5.78025455724326\\
	10	5.97960959124516\\
	12	6.06280765067531\\
	14	6.07418966627676\\
	16	6.02814604486873\\
	18	5.92330612109804\\
	20	5.75123994458063\\
	22	5.5047256331306\\
	24	5.18554993711538\\
	26	4.8077029639912\\
	28	4.393600023417\\
	30	3.96684458235775\\
	32	3.54680119805193\\
	34	3.14679488040034\\
	36	2.774716420844\\
	38	2.43441965679353\\
	40	2.12704606579481\\
	42	1.85202597257387\\
	44	1.60776893729628\\
	46	1.39212106770124\\
	48	1.20266210169825\\
	50	1.03689564374061\\
	52	0.892368659519863\\
	54	0.766744062852011\\
	56	0.657842116102394\\
	58	0.563661105825456\\
	60	0.482384335291442\\
};
\addlegendentry{$d=0.1$}

\addplot [color=black,dotted, line width=2.0pt]
table[row sep=crcr]{%
	2	2.71601031462174\\
	4	3.86179387553576\\
	6	4.21093682822187\\
	8	4.13106513841121\\
	10	3.7884979600029\\
	12	3.29613705336094\\
	14	2.75159352342507\\
	16	2.22686773857826\\
	18	1.76170601871067\\
	20	1.37049962447581\\
	22	1.0526949826558\\
	24	0.800626935691896\\
	26	0.604138441391481\\
	28	0.452972312311818\\
	30	0.337864491978002\\
	32	0.250937536646936\\
	34	0.185739590494455\\
	36	0.137119404863226\\
	38	0.101039319068034\\
	40	0.0743784379041932\\
	42	0.0547507746129475\\
	44	0.0403482494379805\\
	46	0.0298106669534837\\
	48	0.0221210139574119\\
	50	0.0165228314501183\\
	52	0.0124559917763731\\
	54	0.00950737135487563\\
	56	0.0073733244614955\\
	58	0.00583135804526718\\
	60	0.00471889160767195\\
};
\addlegendentry{$d=0.2$}

\addplot [color=black,dashdotted, line width=2.0pt]
table[row sep=crcr]{%
	2	2.56735295610509\\
	4	3.26366551544666\\
	6	3.12064893445\\
	8	2.60861953420544\\
	10	2.00084949004406\\
	12	1.45050581164727\\
	14	1.01320253346027\\
	16	0.690060842604522\\
	18	0.461609399712053\\
	20	0.304751956312576\\
	22	0.199252084370319\\
	24	0.12937658817298\\
	26	0.0836441731160269\\
	28	0.0539969327476001\\
	30	0.034926976812149\\
	32	0.0227405964706642\\
	34	0.0149962604305036\\
	36	0.0100983428720433\\
	38	0.00701356732423774\\
	40	0.00507786751120084\\
	42	0.00386717286049914\\
	44	0.00311214323667685\\
	46	0.00264251567249365\\
	48	0.00235110038480703\\
	50	0.00217065983152259\\
	52	0.0020591532293175\\
	54	0.00199036994347945\\
	56	0.00194801112280326\\
	58	0.00192196525357187\\
	60	0.00190597272321144\\
};
\addlegendentry{$d=0.3$}

\addplot [color=red, line width=2.0pt, draw=none,only marks, mark=o, mark options={solid, red}]
table[row sep=crcr]{%
	13	6.07608549569554\\
};
\addlegendentry{Optimal $M^*$}

\addplot [color=red, line width=2.0pt, draw=none, mark=o, mark options={solid, red}]
table[row sep=crcr]{%
	7	4.2118969795718\\
};

\addplot [color=red, line width=2.0pt, draw=none, mark=o, mark options={solid, red}]
table[row sep=crcr]{%
	4	3.26366551544666\\
};

\addplot [color=red,dashdotted, line width=1.0pt]
table[row sep=crcr]{%
	13	0\\
	13	6.07608549569554\\
};

\addplot [color=red,dashdotted, line width=1.0pt]
table[row sep=crcr]{%
	7	0\\
	7	4.2118969795718\\
};

\addplot [color=red,dashdotted, line width=1.0pt]
table[row sep=crcr]{%
	4	0\\
	4	3.26366551544666\\
};

\end{axis}
\end{tikzpicture}%
	\caption{TEC as a function of number of VMs ($M_{\text{MEC}}$).}
	\label{fig:Scalable}
\end{figure}
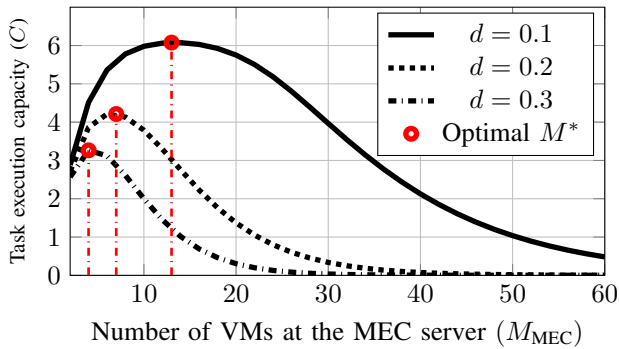

Finally, Fig. 5 shows the \ac{TER} as a function of the repair rate $\gamma$ for different values of failure rate $\delta$. For the extreme case of $\delta=0$, the \ac{TER} equals 1, independent of $\gamma$, since no \ac{VM} will ever fail. As $\delta$ increases, we observe the impact of the repair rate on the \ac{TER}, especially within the range $\gamma \in [0,1]$. For higher values of $\gamma$, the \ac{TER} starts to saturate, owing to its superiority over $\delta$, which yields it insignificant with respect to the \ac{TER}.

\begin{figure}
	\centering
	\input{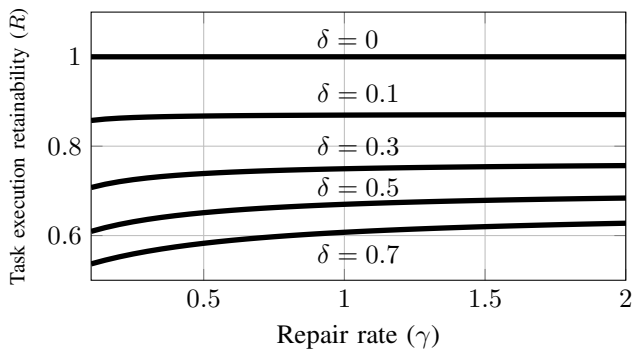}
	\caption{TER as a function of repair rate $\gamma$.}
	\label{fig:Failure}
\end{figure}


	\vspace{-2pt}
\section{Conclusion}\label{conclusion}

This letter presents a spatiotemporal framework to characterize the network-wide task execution from a dependability perspective considering a coverage-based offloading feasibility criterion. Modeling tools are utilized to derive mathematical expressions of the OSP and a number of novel task execution dependability-based KPIs, such as CRA, TER and TEC. To yield the framework practical, \acp{VM} failures and repairment events are considered. Numerical results showcase regimes where the system transitions from the \textit{offloading-dominant} to the \textit{local execution-dominant} regime. Different system parameters such as task arrival rate, densification ratio and VM computation capabilities, are presented to obtain an understanding of the system's behavior. Finally, we show that assuming a given parameterization, there exists an optimal number of VMs, which, when deployed, maximizes the TEC.

\bibliographystyle{./lib/IEEEtran.cls}
\bibliography{./literature/Literature_Local}

\end{document}